\newtheorem{theorem}{Theorem}
\newtheorem{lemma}[theorem]{Lemma}
\newtheorem{definition}[theorem]{Definition}
\newtheorem{example}[theorem]{Example}
\title{Parameterized Verification of Coverability in Well-Structured Broadcast Networks}
\author{A.R.Balasubramanian
\institute{Chennai Mathematical Institute\\
Chennai, India}
}
\begin{document}
\maketitle

\begin{abstract}
Parameterized verification of coverability in broadcast networks with finite state processes has been studied for different types of models and topologies. In this paper, we attempt to develop a theory of broadcast networks in which the processes can be well-structured transition systems. The resulting formalism is called \textit{well-structured broadcast networks}. 
We give an algorithm to decide coverability of well-structured broadcast networks when reconfiguration of links between nodes is allowed. Further, for various types of communication topologies, we also prove the decidability of coverability in the static case as well. We do this by showing that for these types of static communication topologies, the broadcast network itself is a \textit{well-structured transition system}, hence proving the decidability of coverability in the broadcast network.
\end{abstract}

\section{Introduction}

Specification and verification of infinite-state systems is a challenging task. Over the last two decades, various techniques have been proposed for checking safety and other properties of such systems, with one of the most prominent among them being the concept of a \textit{well-structured transition system} \cite{wsts,wqts}. A well-structured transition system is a transition system equipped with a well-quasi ordering on its states. Under some mild assumptions on the transition system, it is known that coverability in such systems is decidable.

Parameterized verification comprises of studying networks formed of anonymous agents executing the same code which interact with each other through some medium of communication, like broadcast, rendez-vous and shared variables \cite{crowd,reason,stochasticreg}. Its aim is to certify the correctness of all instances of the model, independently of the (parameterized) number of agents. Such problems are usually phrased in terms of infinite-state systems, to which technqiues from infinite-state verification theory can be applied. Indeed, a lot of results on parameterized verification prove that the underlying infinite state space of networks is a well-structured transition system. \cite{static, clique,register,time}

Broadcast networks are a formalism introduced in \cite{static}, in which the agents can broadcast messages simultaneously to all its neighbors. The number of agents and the communication topology are fixed before the start of the execution. Parameterized verification of such systems involves checking whether a specification holds irrespective of the number of agents or the communication topology. One of the prominent specifications considered in literature for such systems is the problem of coverability: \textit{does there exist an initial configuration from which at least one agent may reach a particular state}. In \cite{static}, the authors prove that the coverability problem for broadcast networks is undecidable even when the agents are finite state processes. Also, undecidability  has been proven for broadcast networks restricted to bounded-diameter topologies \cite{clique} and decidability has been proven for bounded-path topologies \cite{static},
bounded-diameter and degree topologies \cite{clique}, and clique topologies. Further, when we allow \textit{reconfigurations} of links in the underlying communication topology, there exists a polynomial time algorithm to decide coverability of broadcast networks comprising of finite-state processes \cite{mobile}.
This result perhaps seems surprising, since the reconfigurable case looks like a generalization of the static case. We note that a similar dichotomy exists between the verification of perfect and lossy channel systems. There has also been some work in extending the results of parameterized verification from the finite-state case to probabilistic automata \cite{prob,probtime} and timed automata \cite{time}. With the theory of broadcast networks having been explored for these various types of models, it seems natural to try to develop a theory of broadcast networks with well-structured transition systems as the underlying processes.

In this paper we study the coverability problem for broadcast networks where each process can be a \textit{labeled well-structured transition system}.
In such systems, the underlying process itself can have infinite states. We call such systems \textit{well-structured broadcast networks}. We prove that the coverability problem is decidable for various classes of restricted topologies in this setting. In particular, we prove decidability for the set of all clique topologies, the set of all path-bounded topologies and the set of all topologies with bounded diameter and degree. We show that for these sets of topologies with well-structured transition systems as processes, the underlying state space of networks is itself a well-structured transition system. 
We also give an algorithm for deciding the coverability of a configuration for well-structured broadcast networks when reconfiguration of edges is permitted between the interacting agents.\\

\textbf{Acknowledgements: } I am extremely grateful to Nathalie Bertrand and Nicolas Markey for useful discussions on the topic and also for assisting in the preparation of this paper. I would also like to thank Thejaswini K.S and Mirza Ahad Baig for comments on early drafts of this paper and the anonymous reviewers for their valuable feedback.

\section{Well-structured broadcast networks}

In this section, we recall results about \textit{well-structured transition systems} \cite{wsts,wqts} and use them to 
define \textit{well-structured broadcast networks}. We also introduce the \textit{reconfiguration} semantics for such networks as a way
of modelling link changes that might occur in the underlying communication topology.

\subsection{Well-structured transition systems}

\begin{definition}
A well-quasi ordering (wqo) $\le$ on a set $X$ is a reflexive, transitive binary relation s.t. any infinite sequence of elements $x_0,x_1,\cdots$ contains an increasing pair $x_i \le x_j$ with $i < j$.
\end{definition}

\begin{definition}
A labelled well-structured transition system (labelled WSTS) is a tuple $TS = (S,\Sigma,S_0,R,\le)$ where 

\begin{itemize}
\item $S$ is a set of configurations
\item $\Sigma$ is a finite set of symbols called the alphabet
\item $R \subseteq S \times \Sigma \times S$ is the transition relation
\item $S_0$ is the set of initial configurations
\item $\le \subseteq S \times S$ is a well-quasi order between states such that: 
	\begin{itemize}
	\item $\le$ is compatible with $R$, i.e., if $a \in \Sigma$ and $s_1 \le t_1$ and $(s_1,a,s_2) \in R$, 
	then $\exists \; t_2$ s.t. $(t_1,a,t_2) \in R$ and $s_2 \le t_2$
	\end{itemize}
\end{itemize}

\end{definition}

Note that our definition of labelled WSTS is robust in the sense that if we restrict the WSTS to transitions of a particular label, we still get a WSTS. A WSTS is called \textit{finitely branching} if 
for each $s \in S$, there are only finitely many
transitions of the form $(s,a,s') \in R$. We will consider only finitely branching WSTS in this paper.

We call a set of configurations $I \subseteq S$, \textit{upward-closed} if $x \in I$ and $y \ge x$ implies $y \in I$. To any subset $I \subseteq S$,
we define $\uparrow I = \{x: \exists \; y \in I, \;
x \ge y\}$. In particular a set $I$ is upward-closed iff $I = \uparrow I$. 
A \textit{basis} for an upward-closed set $I$, is a set $I^b$ s.t. $I = \uparrow I^b$. 
It is known that for a wqo, every upward-closed set has a finite basis.

Given a set of configurations $I$, denote by $pre(I)$ the set $\{s' \in S: (s',a,s) \in R, \text{ for some } a \in \Sigma, s \in I\}$. For $i > 0$, let $pre^i(I):= \{s' \in S: (s',a,s) \in R, \text{ for some } a \in \Sigma, s \in pre^{i-1}(I)\}$ and let $pre^*(I):= \cup_{i \in \mathbb{N}} \; pre^i(I)$. We will write $s \rightarrow s'$ to mean that $s \in pre(s')$
and $s \xrightarrow{*} s'$ to mean that $s \in pre^*(s')$. A labelled WSTS is said to have
\textit{effective pre-basis} if given a finite basis for the upward-closed set $I$, we can compute a basis for the set $pre(I)$.

The \textit{coverability} problem for labelled WSTS is the following: Given a configuration $s$, decide if $\exists \; s',s_0$ s.t. $s_0 \in S_0$ and $s' \ge s$ and $s_0 \xrightarrow{*} s'$.

From \cite{wqts,wsts} it is known that 

\begin{theorem}
Coverability is decidable for labelled WSTS with effective pre-basis and a decidable wqo.
\end{theorem}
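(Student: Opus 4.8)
The plan is to run the classical backward-reachability (saturation) algorithm on the target set $\uparrow\{s\}$ and prove that it terminates and correctly decides coverability. First I would record the key consequence of compatibility: if $I$ is upward-closed, then so is $pre(I)$. Indeed, if $t' \ge s'$ and $(s',a,s) \in R$ with $s \in I$, compatibility yields $t$ with $(t',a,t) \in R$ and $t \ge s$, hence $t \in I$ and $t' \in pre(I)$. Starting from $I_0 := \uparrow\{s\}$, which has finite basis $\{s\}$, it follows that each set $K_n := \bigcup_{i \le n} pre^i(I_0)$ is upward-closed, that $K_0 \subseteq K_1 \subseteq \cdots$, and that $\bigcup_n K_n = pre^*(\uparrow\{s\})$. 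A short computation shows $K_{n+1} = K_n \cup pre(K_n)$, so if $B_n$ is a finite basis of $K_n$ then $B_n$ together with a basis of $pre(\uparrow B_n)$ is a finite basis of $K_{n+1}$; the effective-pre-basis hypothesis makes each such step computable, so the algorithm maintains a finite basis $B_n$ of $K_n$ at every stage.

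Next I would argue termination. Since $\le$ is a wqo and every upward-closed set has a finite basis, there is no infinite strictly ascending chain of upward-closed subsets of $S$: a chain $U_0 \subsetneq U_1 \subsetneq \cdots$ would, choosing $x_n \in U_n \setminus U_{n-1}$, yield a sequence with no increasing pair, because $x_i \le x_j$ with $i < j$ and $x_i \in U_i$ would force $x_j \in U_i \subseteq U_{j-1}$, contradicting $x_j \notin U_{j-1}$. Hence the chain $(K_n)_n$ stabilizes at some $N$, and then $K_N = pre^*(\uparrow\{s\})$. To detect stabilization effectively, at each step the algorithm tests whether $\uparrow B_{n+1} \subseteq \uparrow B_n$, i.e. whether for every element $x$ of the freshly computed pre-basis there is $y \in B_n$ with $y \le x$; this test is decidable precisely because the wqo is decidable. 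When it succeeds we stop with $B_N$, a finite basis of $pre^*(\uparrow\{s\})$.

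Finally, coverability holds iff $S_0 \cap pre^*(\uparrow\{s\}) \neq \emptyset$, i.e. iff some initial configuration lies above an element of $B_N$. I would note that this last check is decidable under the standard effectiveness assumption on initial configurations — for instance when $S_0$ is itself upward-closed with a known finite basis, as will be the case for all the broadcast networks treated in the sequel — so the algorithm answers the coverability question. The main obstacle is the termination argument: everything hinges on converting the wqo into the ascending chain condition for upward-closed sets and on the decidability of $\le$ to recognize the fixed point, while the computation of the successive pre-bases is routine bookkeeping once the effective-pre-basis hypothesis is available.
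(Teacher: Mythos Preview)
Your proposal is correct and follows essentially the same approach as the paper's own (brief) proof sketch: the classical backward-reachability saturation starting from $\uparrow\{s\}$, iterating $pre$, and using the wqo to guarantee stabilization and the effective pre-basis to compute each step. You simply spell out the details the paper leaves implicit, including the ascending-chain-condition argument for termination and the role of decidability of $\le$ in recognizing the fixed point; your caveat about effectiveness of the initial-configuration check is a point the paper glosses over.
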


The idea behind the proof is as follows: Given a configuration $s$, we compute the following sequence of upward-closed sets $U_0 = \uparrow s$ and 
$U_{i+1} = pre(U_i)$. This sequence will eventually saturate to some $U_m$ which will give us a finite basis for $pre^*(U_0)$. Checking whether $s$ can be covered now amounts to checking if there is at least one initial configuraion in $\uparrow U_m$. 

Common examples of labelled WSTS include: Any finite state system, vector addition systems with states (VASS), Petri nets with reset arcs, Petri nets with transfer arcs and lossy counter machines.\\

A labelled WSTS might be an infinite state system and so it is infeasible to describe the entire set of configurations in an explicit way.
Usually, a labelled WSTS $TS = (S,\Sigma,S_0,R,\le)$ is given by means of a finite description $P = (Q,\Sigma,Q_0,\Delta,\cdots)$. The 
finite description may have additional structure like counters, causal relations etc. 
The structure of the transition relation $\Delta$ depends on the type of labelled WSTS that it describes.

\begin{example}
Let $(Q,\Sigma,Q_0,\Delta,V)$ be a vector addition system with states (VASS) where $Q$ is a finite set of states, $\Sigma$ is a finite alphabet, $Q_0$ is a set of initial states, $V$ is a finite set of vectors over $\mathbb{Z}^d$ 
(for some $d$) and $\Delta$ is of the form $\Delta \subseteq Q \times \Sigma \times V \times Q$. This describes a labelled WSTS $(S,\Sigma,S_0,R)$ where
$S$ is the set of all configurations, i.e., $S = \{(p,u): (p,u) \in Q \times \mathbb{N}^d\}$, $S_0 = \{(p,u): (p,u) \in Q_0 \times 0^d\}$. The transition
relation $R$ is defined in the following manner: 
$((p,u),a,(q,w)) \in R$ iff $\exists \; v \in V,
(p,a,v,q) \in \Delta$ s.t. $u+v \ge 0$ and $w = u+v$. In this case we see that each transition $((p,u),a,(q,w)) \in R$ is 
described by a transition $(p,a,v,q) \in \Delta$.
\end{example}

For the rest of this paper we will assume that every labelled WSTS $TS$ will be given by means of a finite description $P$.

\subsection{Well-structured broadcast networks}

In this section, we define \textit{well-structured broadcast networks} and also introduce the \textit{reconfiguration} semantics.

Throughout the paper, we fix a finite alphabet $\Sigma$.
Let the set of symbols $\{!!a: a \in \Sigma\}$ be 
denoted by $\Sigma_b$ and let the set of symbols
$\{??a: a \in \Sigma\}$ be denoted by $\Sigma_r$.

\begin{definition}
A \textbf{process} is a labelled well-structured transition system $TS = (S,\Sigma_b \cup \Sigma_r,S_0,R,\le)$.
\end{definition}

A well-structured broadcast network consists of several copies of a single process $TS$. Each configuration of such a network is 
an undirected graph in which each node is labelled by a configuration $s \in S$. Intuitively, the labels $!!a$ and $??a$ correspond to
broadcasting and receiving messages according to the topology specified by the underlying graph. Formally,

\begin{definition}
An $S$-graph is a graph $G = (V,E,L)$ where $L$ is a labelling function $L: V \to S$.
\end{definition}

An $S$-graph represents an undirected graph in which each node $v \in V$ is executing the same process $TS$ and is currently in the
configuration $L(v)$.

We now use the notion of a process to define a transition system called the \textit{well-structured broadcast network}.

\begin{definition}

Given a process $TS = (S,\Sigma_b \cup \Sigma_r,S_0,R,\le)$, a \textit{well-structured broadcast network} is a tuple $BN(TS) = (\Theta,\Theta_0,\rightarrow)$, where 

\begin{itemize}
\item $\Theta$ is the set of all finite $S$-graphs
\item $\Theta_0$ is the set of all finite $S_0$-graphs and 
\item $\rightarrow$ is defined as follows: If $\theta = (V,E,L)$ and $\theta' = (V,E',L')$, then $\theta \xrightarrow{a} \theta'$ if 

\textbf{Broadcast: } $E = E'$ and $\exists \; v \in V$ s.t. 
\begin{enumerate}
\item $(L(v),!!a,L'(v)) \in R$
\item $(L(u),??a,L'(u)) \in R$ for every node $u$ connected to $v$
\item $L'(w) = L(w)$ for every other node $w$
\end{enumerate}

\end{itemize}

\end{definition}

Whenever the process $TS$ is clear from the context, we refer to the broadcast network only by $BN$.

The well-structured broadcast network can be thought of as follows: 
We have a graph in which each vertex runs a copy of the process $TS$ and the current 
label of the vertex $v$ denotes the configuration of the process at $v$.
At each time step, a process in some vertex $v$ chooses to broadcast a message $!!a$
and it is received ($??a$) by all its neighbors $u$.

Notice that this formulation of broadcast networks does not permit changes in links in the underlying topology.
To model such changes, we use the notion of \textit{reconfigurations}. A 
\textit{reconfigurable well-structured broadcast network} is a well-structured broadcast network in which along with \textit{Broadcast} moves, we 
also allow transitions of the following kind: $\theta = (V,E,L) \rightarrow \theta' = (V,E',L')$ if \\

\textbf{Reconfiguration: } $L = L'$ and $E' \subseteq V \times V \setminus \{(v,v): v \in V\}$\\

Any reconfiguration corresponds to a non-deterministic change in the underlying network topology of the processes. We denote the resulting transition system by $RBN(TS)$.

Given a well-structured broadcast network $BN$, the \textit{coverability} problem, given a configuration $s$, is to decide if there exists an initial graph such that by a series of transitions, we can reach a network topology in which at least one agent attains a configuration $s'$ which covers $s$. More formally, we consider the following problem: Given a configuration $s \in TS$, decide if there
exist $\theta \in \Theta, \theta_0 \in \Theta_0$ and $s' \ge s$ s.t. $\theta_0 \xrightarrow{*} \theta$ and $s'$ is the label of some process in $\theta$.
Notice that this is not the same as asking if $s$ is coverable in $TS$.

\begin{example}
Consider a finite automaton with just two states $q,q'$ s.t. $q \xrightarrow{??a} q'$. This trivially describes a labelled WSTS and coverability in this
case is just reachability. Notice that the state $q'$ can never be reached in the $BN$ described by this automaton. But when we treat this just as a labelled 
transition system without the broadcast network semantics, it is clear that $q'$ can be reached from $q$ in the transition system $TS$. 
To distinguish this, we refer to these two cases distinctly as coverability in $BN$ and coverability in $TS$.
\end{example}

It is known that the coverability problem for well-structured broadcast networks $BN(TS)$ is undecidable. Indeed, it is undecidable even when 
$TS$ is a finite state transition system \cite{static}.
As a first step for overcoming the undecidability, we look at $RBN(TS)$, i.e., the reconfigurable well-structured broadcast network associated with a process $TS$.

\section{Coverability problem for reconfiguration semantics}

In this section, we prove that the coverability problem for the reconfiguration semantics is decidable. In particular, 
we present an algorithm which when given a process $TS$ and a configuration $s$, returns true iff the configuration $s$ can be covered in $RBN(TS)$.

Let $P = (Q,\Sigma_b \cup \Sigma_r,Q_0,\Delta,\cdots)$ be a finite specification of the process $TS$. We assume that for each transition $t \in \Delta$, 
we are able to compute all minimal configurations $c_t(1),c_t(2),\cdots,c_t(l) \in TS$ 
s.t. $t$ is enabled at $c_t(i)$ for each $i$ in the transition system $TS$. (This set is always finite, since the underlying order is a wqo).
Let $c_t = \{c_t(1),\cdots,c_t(l)\}$. Notice that this computation concerns only the semantics of the transition system $TS$ and not that of $RBN$.
For many systems such a computation will be fairly straightforward.

\begin{example}
\begin{enumerate}
\item If $P$ describes a finite state system, then for any transition $t$ we can search through the space of all configurations in 
$TS$ and compute the ones at which $t$ is enabled.

\item In a VASS, given a transition $t = (p,a,v,q)$, it is clear that the configuration $(p,u)$ where $u_i = max(0,-v_i)$ 
is the minimal configuration at which $t$ is enabled. 

\item Since transfer arcs and reset arcs in Petri nets have the same precondition as normal transitions, it follows that computation of 
minimal configurations in these cases is similar to VASS \cite{resetarcs}.
\end{enumerate}
\end{example}

A broadcast transition is a transition in $\Delta$ labelled by letters from $\Sigma_b$. Similarly, a receive transition is one labelled by letters from $\Sigma_r$. Further for each letter $a \in \Sigma$, we define $B_a$ to be the set of all broadcast transitions labelled by $!!a$. Similarly, 
we define $R_a$ to be the set of all receive transitions labelled by $??a$. Let $Rec = \cup_{a} R_a$. In the sequel, given some finite specification $P'$, we denote by $TS(P')$ the transition system that $P'$ describes and we denote by $RBN(P')$ the reconfigurable broadcast network that $TS(P')$ describes.

The coverability algorithm for $RBN$ is given in Algorithm 1. The algorithm proceeds as follows: As a first step, from the original process $P$ we remove all transitions in $Rec$, to get a modified process $P'$. At each iteration of the main loop, for each letter $a$ and for each transition $t$ in $B_a$, we check if atleast one configuration from the set $c_t$ can be covered in the current process $P'$.
Intuitively, this means that some agent in the network can reach a configuration, from which it would be capable of broadcasting the letter $a$. At this point, we update the process $P'$ by adding all the receive transitions labelled by $a$. Whenever in the future, an agent wants to take a transition labelled by $??a$, it can do so now, because we can make another agent reach a configuration capable of broadcasting $a$, and then reconfigure the network, so that both these agents share an edge. This is where the reconfiguration semantics of the network plays a prominent role in checking the coverability of a configuration.
We keep doing this until no more transitions can be added, at which point we check if the required configuration is coverable in the resulting process obtained.\\

\begin{algorithm}[h]
\caption{Coverability algorithm for reconfiguration semantics}
\begin{algorithmic}[1]
\State \textbf{Input: } A finite specification $P = (Q,\Sigma_b \cup \Sigma_r,Q_0,\Delta,\cdots)$ and a configuration $s \in TS(P)$
\State \textbf{Output: } Whether $s$ is coverable in the transition system $RBN$\\

\State $P':= P[ \Delta \leftarrow \Delta \setminus Rec]$ \Comment{Remove all receive transitions from $P$ to get $P'$}
\State $SubAlp:= \Sigma$
\Repeat
	\State $AddT:= \emptyset$ 
	\ForAll{$a \in SubAlp$ } \Comment{Look for symbols that can be broadcast in $TS(P')$}
		\ForAll{$t \in B_a$ }
			\If{$\exists i$ s.t. $c_t(i)$ is coverable in $TS(P')$}
				\State $AddT:= AddT \cup R_a$ \Comment{And store the receive transitions in $AddT$}
				\State $SubAlp:= SubAlp \setminus \{a\}$
			\EndIf
		\EndFor
	\EndFor\\
		
	\State $P':= P'[\Delta \leftarrow \Delta \cup AddT$] \Comment{Add all transitions from the set $AddT$ to $P'$}

\Until{$AddT = \emptyset$}\\

\If{$s$ is coverable in $TS(P')$}
	\State \textbf{return true}
\Else
	\State \textbf{return false}
\EndIf

\end{algorithmic}
\end{algorithm}

Notice that at any point in the algorithm, the transition system $TS(P')$ will always be a well-structured transition system. Indeed at the beginning of the code, $TS(P')$ is exactly the transition system obtained by removing all transitions labelled by $\Sigma_r$ from $TS(P)$ and since $TS(P)$ was a WSTS, $TS(P')$ will also remain a WSTS. Similarly, at each update of the $TS(P')$, we add all transitions of the form $??a$ for some symbol $a \in \Sigma$. Hence, the new transition system $TS(P')$ continues to be a WSTS.

The coverability tests in lines 10 and 20 refer to coverability in the transition system $TS(P')$. 
Also notice that whenever the algorithm increases the
cardinality of the set $AddT$, it decreases the size of $SubAlp$ by 1. Since, the transitions added to $AddT$ are labelled by symbols from $SubAlp$ and since $SubAlp$ is finite, it follows that eventually we can add no more transitions to $AddT$. Therefore, line 18 of the algorithm will eventually become true and so the algorithm always terminates.

Let $AddT_0 = \cup_{a \in \Sigma} B_a$ and for $i > 0$, let $AddT_i$ be the contents of the set $AddT$ at the end of the $i^{th}$ iteration of the $while$ loop. Further, let $P'_0 = P[\Delta \leftarrow \Delta \setminus Rec]$ and $P'_i = P'_{i-1}[\Delta \leftarrow 
\Delta \cup AddT_i]$ for $i > 0$, i.e., $P'_i$ denotes the description $P'$ obtained at the end of the $i^{th}$ iteration of the $while$ loop. Let the total number of iterations of the $while$ loop be $w$. Hence we have a sequence of processes $P'_0, P'_1, \cdots, P'_w$.

In the sequel, we will use the notation $s \xrightarrow{t} s'$ to denote the fact that the transition $t$ is enabled at $s \in S$ and $s'$ is the corresponding configuration reached upon executing $t$ from $s$.

The correctness of this algorithm follows by a series of lemmas.

\begin{lemma}
If a configuration $s$ is reachable in $TS(P'_i)$ for some $i$, then $s$ can be reached in the original reconfigurable broadcast network $RBN(P)$.
\end{lemma}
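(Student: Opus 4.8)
The plan is to induct on $i$. The key object to track is the intuition spelled out in the text: whenever we add the receive transitions $R_a$ to the process at some iteration, it is because \emph{some} configuration $c_t(i)$ for a broadcast transition $t \in B_a$ was found to be coverable in the \emph{previous} process $TS(P'_{i-1})$. By the induction hypothesis, any configuration coverable in $TS(P'_{i-1})$ is reachable in $RBN(P)$, so in particular there is a reachable configuration of the real network in which some single agent is poised to broadcast $!!a$. The point is then that, in $RBN(P)$, we can simulate a single execution step of $TS(P'_i)$ that uses a newly-added $??a$-transition, by using this "witness" broadcaster together with reconfiguration.

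Concretely, I would argue as follows. Suppose $s$ is reachable in $TS(P'_i)$ via a path $s_0 \to s_1 \to \cdots \to s_n = s$. I will build, by induction on the length $n$ of this path, a reachable configuration $\theta$ of $RBN(P)$ containing an agent in configuration (at least) $s$ — in fact I expect we will need to carry a slightly stronger statement, namely that we can reach a configuration of $RBN(P)$ in which \emph{one distinguished agent} is in configuration $s$ while \emph{auxiliary agents} are available and isolated (no edges), so that they can be reused. For the inductive step, consider the last transition $t' : s_{n-1} \xrightarrow{t'} s_n$ of the path. If $t'$ is a transition already present in $P'_0$ (i.e.\ a broadcast transition or a non-communicating internal transition, if any), then the distinguished agent can perform it in $RBN(P)$ with all its neighbours momentarily disconnected via reconfiguration, and (if it is a broadcast) we simply disconnect everyone first so the broadcast is received by nobody. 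If $t'$ is a receive transition $??a$ that was added at iteration $j \le i$, then by construction of the algorithm there is a broadcast transition $t \in B_a$ and an index $k$ with $c_t(k)$ coverable in $TS(P'_{j-1})$. Applying the \emph{outer} induction hypothesis (on $i$) to $P'_{j-1}$, the configuration $c_t(k)$ is reachable in $RBN(P)$; combining this with the path-induction hypothesis — that we can reach a network configuration holding both the distinguished agent in $s_{n-1}$ and a supply of fresh isolated agents — we route one fresh agent through the computation witnessing coverability of $c_t(k)$, reconfigure the network so that this agent is connected \emph{only} to the distinguished agent, have it broadcast $!!a$, let the distinguished agent receive $??a$ and move to $s_n$, and then discard the helper. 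This is exactly the manoeuvre described informally before the algorithm.

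The main obstacle, and the place that needs care, is the bookkeeping of \emph{multiple simultaneous agents} living in one $S$-graph. A single configuration $s$ of $TS(P')$ corresponds to the state of \emph{one} agent, but realizing even one receive step in $RBN(P)$ requires a second agent that has itself been driven through a (possibly long) sequence of steps, each of which may \emph{recursively} require further helper agents. So the right formulation is not "$s$ is reachable in $RBN(P)$" but rather "for every finite multiset of configurations, each reachable in some $TS(P'_j)$ with $j \le i$, there is a reachable configuration of $RBN(P)$ containing pairwise-disjoint (edgeless) agents in those configurations." With reconfiguration this composition is easy — one can always run the witnessing computations for the various agents independently, since reconfiguration lets us keep everything disconnected except when a broadcast must be delivered — but the statement must be set up this way from the start, and the nesting of the two inductions (on $i$, and on path length) must be organized so the recursion is well-founded: receive transitions added at iteration $j$ only ever appeal to coverability in $TS(P'_{j-1})$, so the recursion on the iteration index strictly decreases. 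Once the strengthened statement is in place, each case is a routine reconfiguration argument and the lemma follows by taking the multiset to be the singleton $\{s\}$.
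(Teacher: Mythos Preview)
Your proposal is correct and follows essentially the same architecture as the paper: a primary induction on the iteration index $i$, a secondary induction on the length of the witnessing path in $TS(P'_i)$, and the same case split on whether the last transition is a broadcast or a newly-added receive.

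The one place you diverge is in the bookkeeping for helper agents. You propose strengthening the inductive statement to a multiset version (``for every finite multiset of configurations reachable in some $TS(P'_j)$ with $j\le i$, there is a reachable $RBN(P)$ configuration realizing all of them simultaneously on isolated nodes''). The paper avoids this strengthening by a simpler observation: once the outer induction hypothesis has been applied, the helper's computation is already a \emph{complete} $RBN(P)$ run $\theta'_0 \to \cdots \to \theta'_l$, with whatever auxiliary agents it needs already baked into the graphs $\theta'_j$. One then simply starts from the disjoint union of the two initial $S_0$-graphs $\theta_0$ and $\theta'_0$, runs the two computations in sequence (they do not interfere, having no shared edges), and finally reconfigures to add a single edge between the distinguished node and the broadcaster. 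Your multiset formulation is sound and arguably more explicit, but the paper's disjoint-union trick sidesteps the need to thread a strengthened hypothesis through the induction.
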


\begin{proof}

Let $s$ be a configuration which is reachable in the transition system $TS(P'_i)$.
Further wlog, let $i$ be the first index s.t. $s$ is reachable in $TS(P'_i)$. We will prove by induction on $i$ that the configuration $s$ is reachable in the broadcast network $RBN(P)$ as well.

Suppose $i = 0$. Since $s$ is reachable in $TS(P'_0)$, there exists a path $LP = s_0 \xrightarrow{t_0} s_1 \xrightarrow{t_1} s_2 \cdots
s_{n-1} \xrightarrow{t_n} s_n = s$ in the transition system $TS(P'_0)$. We prove the claim for $i = 0$ by a second induction on $n$. For the base case of $n = 0$, it is clear that $s_0$ is an initial configuration and so $s_0$ can be trivially reached in $RBN(P)$. Suppose $n > 0$. By our second induction hypothesis, the configuration $s_{n-1}$ is reachable in $RBN(P)$, i.e., there exist a reachable graph $\theta$ and a node $v \in \theta$ s.t.
$L(v) = s_{n-1}$. Since $LP$ is a path in $TS(P'_0)$,
the transition $t_n$ has to be a broadcast transition labelled by some letter $!!a$. 
Hence the node $v$ can broadcast $!!a$ and move into the configuration $s_n = s$.

Suppose $i > 0$. Again since $s$ is reachable in 
$TS(P'_i)$, there exists a path $LP = s_0 \xrightarrow{t_0} s_1 \xrightarrow{t_1} s_2 \cdots
s_{n-1} \xrightarrow{t_n} s_n = s$ in $TS(P'_i)$. We prove the claim by a second induction on $n$. For the base case of $n = 0$, again it is clear that $s_0$ is an initial configuration and so it is reachable in $RBN(P)$. Suppose $n > 0$. Similar to the above argument, by our second induction hypothesis, there exists a path in $RBN(P)$ of the form $\theta_0 \rightarrow \theta_1 \cdots \rightarrow \theta_m$ and a node $v \in \theta_m$ s.t. $L(v) = s_{n-1}$. We now consider two cases: Suppose $t_n$ is a broadcast 
transition labelled by $!!a$. It is then clear that $v$ can broadcast $!!a$ to reach the configuration $s_n = s$. 

Otherwise, $t_n$ is a receive transition labelled by some letter $??a$. Since $t_n \in TS(P'_i)$ it must have been added to the set $AddT_j$ for some $j \le i$. But notice that we add a new receive transition labelled by $??a$ in the $j^{th}$ iteration iff there exists a transition $t \in B_a$ and a minimal configuration $c_t(k)$ s.t.
$c_t(k)$ is coverable in the transition system $TS(P'_{i-1})$. Therefore, by definition of coverability $\exists \; s' \ge c_t(k)$ s.t. $s'$ is reachable in $TS(P'_{i-1})$. By our primary induction hypothesis, $s'$ is reachable in $RBN(P)$. So let $\theta'_0 \rightarrow \theta'_1 \cdots \rightarrow \theta'_l$ be a path in $RBN(P)$ and let $v' \in \theta'_l$ s.t. $L(v') = s'$. Notice that by the property of compatibility, there is a broadcast transition labelled by $!!a$ which is enabled at $s'$.

Now consider the initial graphs $\theta_0$ and $\theta_0'$. Execute the first run from $\theta_0$ so that it reaches the graph configuration $\theta_m$. Now, execute the second run from the initial graph $\theta_0'$ so that it reaches the graph configuration $\theta'_l$. This can be done since these two executions are independent of each other. Now add a link between $v$ and $v'$ and broadcast the message $!!a$ from $v'$. Hence $v$ will receive the message $??a$ and will move into the configuration $s_n = s$. 
\end{proof}

\begin{lemma}
If $s$ is reachable in the reconfigurable broadcast network $RBN(P)$, then $s$ is reachable in $TS(P'_w)$.
\end{lemma}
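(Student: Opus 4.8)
The plan is to prove a stronger statement by induction on the length $m$ of a run in $RBN(P)$: if $\theta_0 \xrightarrow{*} \theta_m$ in $RBN(P)$ and $v$ is a node of $\theta_m$ with label $L(v) = s$, then $s$ is reachable in $TS(P'_w)$. Strengthening the claim this way lets the induction hypothesis speak about \emph{every} node of \emph{every} intermediate graph, which is what we need to handle broadcasts: a broadcast step updates the source node \emph{and} all its neighbors simultaneously, so we cannot reason about a single node in isolation.

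The base case $m = 0$ is immediate: $\theta_0$ is an $S_0$-graph, so $L(v) \in S_0$ is an initial configuration of $TS(P'_w)$ and is trivially reachable there. For the inductive step, consider the last transition $\theta_{m-1} \to \theta_m$ and a node $v$ of $\theta_m$. If the step is a \textbf{Reconfiguration}, labels are unchanged, so $L(v)$ is already the label of $v$ in $\theta_{m-1}$ and we are done by the induction hypothesis. If the step is a \textbf{Broadcast} of $!!a$ from some node $u$: for every node $w$ other than $u$ and its neighbors, $L(w)$ is unchanged and the hypothesis applies directly. For $u$ itself, by the hypothesis $L_{m-1}(u)$ is reachable in $TS(P'_w)$, and the broadcast transition $(L_{m-1}(u), !!a, L(u)) \in R$ is a broadcast transition of $P$, hence present in every $P'_i$ including $P'_w$; so $L(u)$ is reachable in $TS(P'_w)$. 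The remaining case is a neighbor $x$ of $u$, which takes a receive transition $t_n = (L_{m-1}(x), ??a, L(x)) \in R$ arising from some $t' \in R_a$ in $\Delta$. Here we must show $t'$ was in fact added by the algorithm, i.e.\ $t' \in AddT_j$ for some $j \le w$, so that this receive transition is available in $TS(P'_w)$; then by the hypothesis $L_{m-1}(x)$ is reachable in $TS(P'_w)$ and firing $t'$ gives $L(x)$.

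The main obstacle is precisely this last point: proving that every receive transition that can actually fire somewhere in a reachable graph of $RBN(P)$ has been added to $\Delta$ by the algorithm. The key observation is that if a receive transition $t' \in R_a$ fires at node $x$ in the step $\theta_{m-1} \to \theta_m$, then the neighbor $u$ of $x$ broadcast $!!a$, meaning some broadcast transition $t \in B_a$ was enabled at $L_{m-1}(u)$; hence $L_{m-1}(u) \ge c_t(k)$ for some minimal configuration $c_t(k) \in c_t$. By the induction hypothesis $L_{m-1}(u)$ is reachable in $TS(P'_w)$, so $c_t(k)$ is coverable in $TS(P'_w)$, and since $P'_w$ is the fixpoint of the algorithm's loop, the termination condition guarantees that $R_a$ (and in particular $t'$) was already added — otherwise the final iteration would have detected $t \in B_a$ with $c_t(k)$ coverable in $TS(P'_w)$ and added $R_a$, contradicting $AddT = \emptyset$ at termination. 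One subtlety to be careful about: the coverability test in the algorithm is performed against the \emph{intermediate} processes $TS(P'_{i-1})$, not $TS(P'_w)$; so to close the argument cleanly one should note that $TS(P'_{i})$'s transition relation only grows with $i$ (reachability is monotone along the chain $P'_0, \dots, P'_w$), and combine this with the fixpoint property of $P'_w$ to conclude that "coverable in $TS(P'_w)$" already forces the relevant $R_a$ into $\Delta$ by stage $w$.
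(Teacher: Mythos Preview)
Your proposal is correct and follows essentially the same approach as the paper's proof: induction on the length of the run in $RBN(P)$, with the key step being that the broadcaster's configuration is (by the induction hypothesis) reachable in $TS(P'_w)$, hence covers some minimal $c_t(k)$, and the fixpoint property of $P'_w$ then forces $R_a$ to have been added. Your explicit strengthening of the induction hypothesis to \emph{all} nodes of $\theta_m$ (rather than a single distinguished node) and your separate handling of the Reconfiguration case are cleaner than the paper's exposition---the paper leaves these implicit (its ``if $s'=s$, done by IH'' covers reconfiguration, and it silently applies the IH to the broadcaster $u$ as well as to $v$)---but the underlying argument is the same.
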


\begin{proof}
Suppose $s$ is reachable in the reconfigurable broadcast network $RBN(P)$. Therefore, there exists an intital path $LP = \theta_0 \rightarrow \theta_1 
\rightarrow \cdots \theta_{n-1} \rightarrow \theta_n$ and a node $v$ s.t. $L_{\theta_n}(v) = s$. We will prove the claim by induction on $n$. The claim is clear for the base case of $n = 0$.

Suppose $n > 0$. Consider the configuration of $v$ in the graph $\theta_{n-1}$, i.e., the configuration $L_{\theta_{n-1}}(v)$ and let it be denoted by $s'$. If $s' = s$, then by the induction hypothesis we are done. Suppose $s' \neq s$. Therefore, there should be a transition from $s'$ to $s$, i.e., $\exists \; t$ s.t. $s' \xrightarrow{t} s$. We now have two cases:

\begin{itemize}
\item Suppose $t$ is a broadcast transition labelled by $!!a$. By induction hypothesis, $s'$ is reachable in $TS(P'_w)$. Since all broadcast transitions are present in $P'_w$, it follows that $s$ is reachable in $TS(P'_w)$ as well.

\item Suppose $t$ is a receive transition labelled by $??a$. Hence the node $v$ in $\theta_{n-1}$ received a message $??a$ and so there should have been a node $u \in \theta_{n-1}$ in configuration $s_u'$ s.t. $u$ broadcast a message $!!a$ to reach some configuration
$s_u$ in the graph $\theta_n$. By induction hypothesis, $s_u'$ is reachable in the transition system $TS(P'_w)$. Hence there exists at least one transition with the broadcast label $!!a$ which is enabled in $TS(P'_w)$. This means that there exists at least one minimal configuration $c \in TS(P'_w)$ s.t. a broadcast transition labelled by $!!a$ is enabled at $c$ and $c$ is coverable. Hence for the letter $a$, line $10$ of the algorithm will eventually become true and so the transition $t$ would have been added to $P'_w$. This means that the transition $s' \xrightarrow{t} s$ is present in $TS(P'_w$). By induction hypothesis, $s'$ is reachable in $TS(P'_w)$ and so $s$ is reachable as well.
\end{itemize}
\end{proof}

Hence, we have

\begin{theorem}
Coverability in reconfigurable well-structured broadcast networks is decidable.
\end{theorem}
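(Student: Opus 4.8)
The plan is to read the theorem off Algorithm~1 together with the two lemmas above, once we have checked that every coverability test used along the way is effective.

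First I would package the two lemmas into a single equivalence. The first lemma, taken at $i = w$, states that any configuration reachable in $TS(P'_w)$ is also reachable in $RBN(P)$, and the second lemma is exactly the converse. Hence for every configuration $s'$ we have that $s'$ is reachable in $TS(P'_w)$ if and only if $s'$ is reachable in $RBN(P)$, that is, $s'$ labels some node of some reachable graph of $RBN(P)$. Plugging in $s' \ge s$ and unfolding the definitions, $s$ is coverable in $RBN(P)$ precisely when some $s' \ge s$ is reachable in $TS(P'_w)$, which is exactly coverability of $s$ in the transition system $TS(P'_w)$ (and \emph{not} coverability of $s$ in the broadcast network over $TS(P'_w)$, per the earlier example). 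So the whole problem reduces to one coverability query in $TS(P'_w)$ --- which is what line~20 performs --- provided we can actually carry out that query, and provided the intermediate queries in line~10 that are needed to build $P'_w$ are themselves decidable.

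Second, I would verify that effectiveness is preserved throughout the algorithm. The text already notes that each $TS(P'_i)$ stays a WSTS: $P'_0$ is obtained from $P$ by deleting transitions, and every later update only re-adds whole classes $R_a$ of receive transitions that were already present in $TS(P)$, so compatibility of $\le$ is inherited. To apply the cited decidability result for coverability in labelled WSTS, I additionally need that $TS(P'_i)$ has a decidable wqo --- immediate, since the configuration set and the order $\le$ never change --- an effective pre-basis, and finite branching. Finite branching is clearly preserved. For the effective pre-basis: deleting, respectively adding, the receive transitions of a letter changes $pre$ of an upward-closed set only by the contribution of those transitions, which is computable and finitely describable using the standing assumption that the minimal configurations $c_t$ enabling each transition $t$ can be computed. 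Thus the cited theorem applies to every $TS(P'_i)$, and in particular every test in lines~10 and~20 terminates with a correct answer.

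Finally I would recall the termination argument already spelled out: every time $AddT$ gains an element, $SubAlp$ loses one, and $SubAlp \subseteq \Sigma$ is finite, so the main loop runs at most $|\Sigma|$ times and the description $P'_w$ is genuinely computed. Combining this with the previous two paragraphs, Algorithm~1 on input $(P,s)$ terminates and returns true exactly when $s$ is coverable in $RBN(P)$, which proves decidability. The only point that needs real care is the bookkeeping in the third paragraph, namely making fully precise that the conjunction of \emph{being a finitely branching WSTS with decidable wqo and effective pre-basis} is closed under the two syntactic surgeries the algorithm performs on the finite specification; once that is nailed down, the theorem follows at once.
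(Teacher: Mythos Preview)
Your proposal is correct and follows essentially the same route as the paper: combine the two lemmas to get that reachability in $TS(P'_w)$ coincides with reachability in $RBN(P)$, so coverability of $s$ in $RBN(P)$ reduces to the WSTS coverability query of line~20, and then invoke termination of the main loop. You are in fact more careful than the paper in making explicit why the coverability subroutines in lines~10 and~20 are themselves effective (preservation of finite branching, decidable wqo, and effective pre-basis under the syntactic surgeries), points the paper largely leaves implicit.
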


\begin{proof}
Notice that the algorithm returns its answer based on whether the given configuration $s$ is coverable in $TS(P'_w)$ or not. 

Suppose $s$ is coverable in $TS(P'_w)$. Therefore, 
$\exists \; s' \ge s$ s.t. $s'$ is reachable in $TS(P'_w)$ and so by Lemma 10, $s'$ is reachable in $RBN(P)$. Therefore $s$ is coverable in $RBN(P)$. The other side of the proof follows by a similar argument involving Lemma 11.
\end{proof}

Notice that the main bottleneck in the running time of this algorithm are the coverability tests to the transition system $TS(P')$.\\

Finally, introducing \textit{arbitrary reconfigurations} in the model might not seem too realistic. But in fact, w.r.t \textit{coverability}, this model is equivalent to:

\begin{enumerate}
\item Static topology with intermittent nodes, i.e., a topology in which there are no reconfigurations but nodes can crash and 
restart in the \textbf{same} control state in which it crashed. \cite{errors}
\item Static topology with message loss, i.e., a topology in which there are no reconfigurations but messages may get lost arbitrarily. 
\cite{errors}
\item Asynchronous broadcast network with a bag model. \cite{asynch}
\item Asynchronous broadcast network with a lossy FIFO queue. \cite{asynch}
\item Globally constrained runs, i.e., a run in which the number of reconfigurations allowed in between two broadcasts can be atmost $k \ge 1$.
\cite{constraint}
\item Locally constrained runs, i.e., a run in which the number of reconfigurations \textbf{each node} is allowed to make in between two 
broadcasts can be atmost $k \ge 1$.
\cite{constraint}
\end{enumerate}

The proofs given in these papers are for the case when the processes are finite state systems. But these claims can be proved for the infinite state case as well, by noticing that the corresponding proofs go through even in the case of infinite state systems. Intuitively, this is because the equivalence proofs only manipulate the graph topology of the underlying model.

\section{Coverability problem for restricted topologies}

We have mentioned that the coverability problem in general for well-structured broadcast networks is undecidable \cite{static}. In the previous 
section, we gave an algorithm to decide coverability of \textit{reconfigurable well-structured broadcast networks}.
In this section, we investigate coverability in well-structured broadcast networks without reconfigurations, but the set of all underlying graphs that we will consider will be restricted.
In particular, we prove decidability results for three different classes of restricted topologies, namely bounded path topologies, clique topologies and
bounded diameter and degree topologies. All these results could be seen as extensions of results that have been proved for 
finite state processes \cite{clique,static}.

As a first step, we define the \textit{induced subgraph ordering} between two configurations which will be used extensively to prove 
decidability in all three classes of topologies:

\begin{definition}
Given two configurations $\theta_1 = (V_1,E_1,L_1), \theta_2 = (V_2,E_2,L_2) \in \Theta$, define $\theta_1 \sqsubseteq \theta_2$ iff there exists an injection $h: V_1 \to 
V_2$ s.t. $\forall u,v \in V_1$,

\begin{itemize}
\item $(u,v) \in E_1 \iff (h(u),h(v)) \in E_2$
\item $L_1(u) \le L_2(h(u))$
\end{itemize}
\end{definition}

In other words, the injection $h$ should preserve edges among vertices and also the order of their labeles w.r.t the well-quasi ordering.

\subsection{Bounded path topologies}

In this section, we prove that the coverability problem becomes decidable when we restrict to path bounded graphs. 
We will assume throughtout that a number $k$ is fixed.

In the sequel, given a configuration $\theta$, we will denote its vertex set by $V(\theta)$.
Similarly, $E(\theta)$ and $L(\theta)$ will be used to denote the edge set and the label function of $\theta$ respectively.

\begin{definition} 
A graph $G$ is called $k$-path bounded if the longest simple path in $G$ has length atmost $k$.
\end{definition}

Notice that this is not the same as considering graphs of diameter $k$. 
(A distinction between the two is the clique graph, whose diameter is 1, but whose longest simple path is $n-1$).

Given a process $TS$, we can now define $k$-bounded path  broadcast networks by restricting the set of configurations in $BN(TS)$ to $k$-bounded path topologies, i.e
we define a new transition system $BN^k(TS) = (\Theta^k,\Theta_0^k,\rightarrow)$, where $\Theta^k$ and $\Theta_0^k$ consists only those configurations from
$\Theta$ and $\Theta_0$ which are $k$-path bounded. Notice that in this model, no reconfigurations are allowed between nodes.

We will employ the theory of well-structured transition systems to prove that the coverability problem for $k$-bounded path broadcast networks is decidable. More specifically,
as a first step, we prove the following lemma.

\begin{lemma}
The set of all $k$-path bounded configurations with the induced subgraph ordering is a well-quasi ordering.
\end{lemma}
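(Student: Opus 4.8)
The plan is to reduce this to the classical result that the \emph{induced subgraph ordering on graphs of bounded tree-depth (equivalently, bounded longest path) is a well-quasi-ordering}, together with the fact that the labels are drawn from a wqo $(S,\le)$. Concretely, I would argue by induction on the path bound $k$. The key structural observation is that a connected graph whose longest simple path has length at most $k$ has a vertex whose removal drops the path bound: more precisely, if $G$ is connected and $k$-path bounded, then $G$ has a vertex $r$ such that every connected component of $G - r$ is $(k-1)$-path bounded. (One takes $r$ to be an endpoint of a longest path, or more carefully a vertex lying on every longest path; removing it strictly shortens the longest path in each remaining component.) This lets us encode a connected $k$-path bounded $S$-graph as a finite tree of height $k$ whose nodes carry labels from $S$, and embeddings of such trees that respect the tree structure induce induced-subgraph embeddings of the original graphs.

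The steps, in order, would be: (1) Reduce to connected graphs — an arbitrary $k$-path bounded $S$-graph is a finite multiset of connected $k$-path bounded $S$-graphs, and if the ordering on connected ones is a wqo then so is the ordering on finite multisets, by Higman's lemma. Here one must check that $\theta_1 \sqsubseteq \theta_2$ holds iff the components of $\theta_1$ can be injectively matched to components of $\theta_2$ with each matched pair in the $\sqsubseteq$ relation — this uses that there are no edges between components, so an injection $h$ witnessing $\sqsubseteq$ cannot split or merge components. (2) Prove the structural lemma above about a "root" vertex. (3) Set up the induction on $k$: for $k=1$ (or $k=0$, the empty/single-vertex case) the claim is immediate since a $1$-path bounded connected graph is a single vertex labelled by an element of $S$, and $(S,\le)$ is a wqo. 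For the inductive step, map each connected $k$-path bounded $S$-graph to the pair consisting of the root label in $S$ and the finite multiset of its components after root deletion, each of which is a connected $(k-1)$-path bounded $S$-graph (recursively, a bounded-height labelled tree). By the induction hypothesis plus Higman's lemma, the ordering on these multisets is a wqo, and a product of wqos is a wqo, so the ordering on the encodings is a wqo. (4) Finally, transfer back: show that if the encoding of $\theta_1$ embeds into the encoding of $\theta_2$ (respecting root labels and the multiset matching), then $\theta_1 \sqsubseteq \theta_2$ as $S$-graphs — the tree embedding gives the injection $h$ on vertices, edges within the tree are preserved by construction, and one checks that non-tree edges and non-edges are handled correctly, which is where the "induced" (iff) condition on edges must be verified carefully.

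The main obstacle is step (4) together with the edge-preservation bookkeeping: the ordering $\sqsubseteq$ demands $(u,v)\in E_1 \iff (h(u),h(v))\in E_2$, i.e. an \emph{induced} subgraph condition, not just a subgraph condition, so the tree encoding has to record enough adjacency information that an embedding of encodings forces both directions of this biconditional. The clean way to handle this is to make the encoding of a connected $k$-path bounded $S$-graph be a rooted tree where each vertex additionally remembers, as part of its label, the \emph{set of ancestors it is adjacent to in $G$} (a finite set, of size $< k$, hence ranging over a finite alphabet once we fix $k$); then an ancestor-respecting tree embedding that matches these enriched labels exactly recovers the full induced-subgraph relation. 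With this refinement the argument is a routine induction, and the combinatorial core is exactly Higman's lemma applied at each level plus the finiteness of the auxiliary adjacency alphabet.
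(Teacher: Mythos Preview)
The paper's proof is a one-line citation to Ding's theorem; you are instead reconstructing that theorem (in its wqo-labelled form) from scratch. That is a legitimate and more self-contained route, and your steps (1), (3), and especially (4) --- encoding each vertex by its $S$-label together with the subset of ancestors it is adjacent to, then applying Higman level by level on bounded-height rooted trees --- are exactly the standard ingredients of Ding's proof.

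There is, however, a genuine gap in step (2) as you justify it. Taking $r$ to be an endpoint of a longest path does not work: in the spider with centre $c$ and three legs of length $2$, the longest path has length $4$, but deleting a leaf leaves a graph that still contains a path of length $4$ through the other two legs. Your fallback, ``a vertex lying on every longest path'', presumes that such a vertex exists, which is precisely Gallai's question on longest paths and is not known to hold in general. The correct fix is the one your step (4) already implicitly uses: perform a DFS from an arbitrary vertex $r$. The resulting DFS tree has depth at most $k$ (every root-to-leaf path is a simple path in $G$), and every edge of $G$ joins an ancestor to a descendant in this tree. Removing the root therefore leaves components each contained in a subtree of depth at most $k-1$, and the induction should be on the depth of this elimination tree (equivalently, on tree-depth) rather than directly on the longest-path bound. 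With that reformulation the argument goes through cleanly.

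In short: your approach is the proof behind the paper's citation, and it is sound once step (2) is rephrased via DFS/tree-depth rather than via endpoints or Gallai-type common vertices.
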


\begin{proof}
Follows from Ding's theorem \cite{Ding}.
\end{proof}

As a next step, we prove that the induced subgraph ordering is \textit{compatible} with $BN^k(P)$.

\begin{lemma}
For every $\theta_1,\theta_2,\theta_1' \in \Theta^k$ s.t. $\theta_1 \xrightarrow{a} \theta_2$ and $\theta_1 \sqsubseteq \theta_1'$, there exits $\theta_2'
\in \Theta^k$ s.t. $\theta_1' \xrightarrow{a} \theta_2'$ and $\theta_2 \sqsubseteq \theta_2'$.
\end{lemma}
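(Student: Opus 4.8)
The plan is to use two facts: a transition of $BN^k$ rewrites only the label of a single vertex and those of its neighbours while leaving the graph $(V,E)$ fixed, and the process $TS$ is itself compatible with $\le$. Since no reconfigurations are available in $BN^k$, the step $\theta_1 \xrightarrow{a} \theta_2$ must be a \textbf{Broadcast} move: there is a vertex $v \in V(\theta_1)$ with $(L_{\theta_1}(v),!!a,L_{\theta_2}(v)) \in R$, with $(L_{\theta_1}(u),??a,L_{\theta_2}(u)) \in R$ for every neighbour $u$ of $v$ in $\theta_1$, and with $L_{\theta_2}(w)=L_{\theta_1}(w)$ for every other vertex $w$. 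Fix also the injection $h\colon V(\theta_1)\to V(\theta_1')$ witnessing $\theta_1 \sqsubseteq \theta_1'$. The idea is to let $h(v)$ be the broadcaster in $\theta_1'$, to take for $\theta_2'$ the graph underlying $\theta_1'$ with only the labels of $h(v)$ and of its neighbours changed, and to check that the \emph{same} $h$ witnesses $\theta_2 \sqsubseteq \theta_2'$.

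First I would manufacture the move out of $\theta_1'$. Compatibility of $TS$ applied to $L_{\theta_1}(v)\le L_{\theta_1'}(h(v))$ and the broadcast transition produces some $\widehat s \ge L_{\theta_2}(v)$ with $(L_{\theta_1'}(h(v)),!!a,\widehat s)\in R$. Next, for each neighbour $w'$ of $h(v)$ in $\theta_1'$: if $w'=h(u)$ for some $u\in V(\theta_1)$, then because $h$ is an \emph{induced}-subgraph embedding we get $(v,u)\in E(\theta_1)$, so $u$ is a receiver, $(L_{\theta_1}(u),??a,L_{\theta_2}(u))\in R$, and compatibility yields $\widehat s_{w'}\ge L_{\theta_2}(u)$ with $(L_{\theta_1'}(w'),??a,\widehat s_{w'})\in R$; if instead $w'\notin h(V(\theta_1))$ we still need $w'$ able to receive $??a$, which I would get from the usual non-blocking convention for these networks (a configuration with no explicit $??a$-transition simply ignores the message), and pick the corresponding $\widehat s_{w'}$. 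Let $\theta_2'$ have the same vertices and edges as $\theta_1'$, label $\widehat s$ at $h(v)$, label $\widehat s_{w'}$ at each neighbour $w'$ of $h(v)$, and the old labels of $\theta_1'$ everywhere else. Then $\theta_1' \xrightarrow{a} \theta_2'$ by construction, and $\theta_2'\in\Theta^k$ because it has the same underlying graph as $\theta_1'\in\Theta^k$ (path-boundedness depends only on the graph).

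It remains to check $\theta_2 \sqsubseteq \theta_2'$ through the same $h$. The edge condition is inherited verbatim from $\theta_1 \sqsubseteq \theta_1'$, since passing from $\theta_1,\theta_1'$ to $\theta_2,\theta_2'$ alters no edge. For the labels: $L_{\theta_2}(v)\le\widehat s = L_{\theta_2'}(h(v))$; for a neighbour $u$ of $v$ (whose image $h(u)$ is then a neighbour of $h(v)$, again by induced-ness) $L_{\theta_2}(u)\le\widehat s_{h(u)} = L_{\theta_2'}(h(u))$; and for any other vertex $w$, $L_{\theta_2}(w)=L_{\theta_1}(w)\le L_{\theta_1'}(h(w))=L_{\theta_2'}(h(w))$. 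Hence $\theta_2 \sqsubseteq \theta_2'$. I expect the one genuinely delicate point to be guaranteeing that an $a$-move out of $\theta_1'$ exists at all — precisely, that the neighbours of $h(v)$ lying outside $h(V(\theta_1))$ can also receive $??a$; this is exactly where the non-blocking/total-reception convention is needed, and without it the claim fails (an isolated broadcasting vertex embeds into an edge whose other endpoint cannot receive $a$). Everything else is routine bookkeeping around compatibility of $TS$ and the locality of $BN^k$-moves.
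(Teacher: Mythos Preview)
Your approach is essentially the same as the paper's: let $h(v)$ broadcast, use compatibility of $TS$ to lift the broadcaster's and receivers' transitions along $\le$, and reuse the same injection $h$ to witness $\theta_2 \sqsubseteq \theta_2'$. You are in fact more careful than the paper in one respect: the paper's proof only mentions the neighbours $h(u_1),\dots,h(u_l)$ of $h(v)$ that lie in the image of $h$ and silently ignores any further neighbours of $h(v)$ in $\theta_1'$, whereas you correctly observe that those extra neighbours must also be able to receive $??a$ and that this relies on a non-blocking reception convention (which the paper never states but implicitly assumes, as is standard in the broadcast-protocol literature it cites).
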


\begin{proof}
Let $v$ be the vertex in $\theta_1$ which broadcasts the message $!!a$ and let $u_1,\cdots,u_l$ be the neighbors of $v$ which receive the message
$??a$. Since $TS$ is well-structured and since $L(\theta_1)(v) \le L(\theta_1')(h(v))$, it follows that there exists a transition $t'$
labelled by $!!a$ which is enabled at $L(\theta_1')(h(v))$. Similarly, since $L(\theta_1)(u_i) \le L(\theta_1')(h(u_i))$, it follows that there
exist transitions $t'_i$ labelled by $??a$ which is enabled at $L(\theta_1')(h(u_i))$. Since $h$ is an injection it follows that 
each $h(u_i)$ is a neighbor of $h(v)$. Hence, we can broadcast the message $!!a$ from $h(v)$ and receive the message $??a$ at $h(u_1),\cdots,h(u_l)$
in the configuration $\theta_1'$. Call the resulting configuration $\theta_2'$. It is clear that the same injection $h: V(\theta_1) = V(\theta_2)
\to V(\theta_1') = V(\theta_2')$ satisfies the required conditions for an order between $\theta_1'$ and $\theta_2'$.
\end{proof}

As a final step, we prove that $BN^k$ has \textit{effective pre-basis}.

\begin{lemma}
If $S \subseteq BN^k$ is an upward closed set and has a finite basis, then we can effectively compute a finite basis for $pre(S)$.
\end{lemma}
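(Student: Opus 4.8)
The plan is to mimic the standard WSTS pre-basis construction, but with the twist that predecessor configurations for a broadcast are obtained from the \emph{process-level} pre-basis, which we already know is computable because each process $TS$ has effective pre-basis. Concretely, let $S = \uparrow\{\theta_1,\dots,\theta_m\}$ be given by a finite basis. Since $pre$ distributes over unions, it suffices to compute a finite basis for $pre(\uparrow\theta)$ for a single $k$-path bounded configuration $\theta$, and then take the union of these bases; the result is still a finite basis because, by Lemma~\ref{lem:wqo}, the induced subgraph ordering on $k$-path bounded graphs is a wqo, so any subset has a finite basis and we can just include all computed elements.

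So fix $\theta = (V,E,L)$. A predecessor $\theta'$ of something in $\uparrow\theta$ arises by choosing a vertex $v$ that broadcasts $!!a$ and a set of neighbors that receive $??a$. I would enumerate the finitely many ``shapes'' of such a move: pick a letter $a \in \Sigma$, pick the broadcasting vertex (either it is the image of some vertex of $\theta$, or it is a fresh vertex not mapped to), similarly decide for each vertex of $\theta$ whether it is the image of a receiving neighbor, a non-receiving vertex, or is not in the image at all, and add finitely many fresh non-receiving vertices as needed — but in fact one can argue fresh vertices are never needed in a minimal predecessor, since deleting an unused vertex only makes the configuration smaller while preserving the move, so it suffices to consider predecessors on the same vertex set $V$ with the same edge set $E$ (a broadcast does not change edges). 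For each such shape, I must compute the possible pre-configurations: for the broadcasting vertex $v$, its label in $\theta'$ must be a configuration from which $!!a$ leads to something $\ge L(\theta)(v)$ (or $\ge$ the target label, if $v$ is not required to be above anything because $v$ is fresh — but we argued fresh vertices are unnecessary); for each receiving neighbor $u$, its label must be a configuration from which $??a$ leads to something $\ge L(\theta)(u)$; all other vertices keep a label $\ge L(\theta)(w)$, i.e.\ their basis is just $\{L(\theta)(w)\}$. The sets of minimal such process-level labels are exactly finite bases of the upward-closed sets $pre_{!!a}(\uparrow L(\theta)(v))$ and $pre_{??a}(\uparrow L(\theta)(u))$, which are computable because $TS$ has effective pre-basis (here I use the robustness remark that restricting a labelled WSTS to a single label is still a WSTS, so $pre$ along a fixed label has a computable basis). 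Taking all combinations across vertices gives finitely many candidate configurations $\theta'$; I keep only those that are $k$-path bounded (a decidable check) and those from which the intended move really lands in $\uparrow\theta$, and output the collection of all of these over all shapes and letters.

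For correctness I would check two inclusions. Soundness: every configuration I output is by construction a genuine predecessor of some element of $\uparrow\theta$, so $\uparrow(\text{output}) \subseteq pre(\uparrow\theta) = pre(S)$ after unioning over the basis of $S$; compatibility (Lemma~\ref{lem:compat}) upgrades this to the claim that the upward closure of the output lies inside $pre(S)$. Completeness: given any $\theta' \in pre(\uparrow\theta)$, witnessed by a move $\theta' \xrightarrow{a} \theta''$ with $\theta'' \sqsubseteq \theta$ via an injection $g$, I restrict attention to the image $g(V(\theta''))$, note the broadcaster and its receiving neighbors in $\theta'$, and read off which ``shape'' this move has; the process labels in $\theta'$ at the broadcaster and receivers dominate minimal elements of the corresponding process-level pre-bases, and at all other vertices they dominate $L(\theta)(g^{-1}(\cdot))$, so the candidate $\theta'$ built from those minimal labels on the vertex set $g(V(\theta''))$ satisfies $\theta'_{\mathrm{cand}} \sqsubseteq \theta'$ and was among those I enumerated — hence $\theta' \in \uparrow(\text{output})$.

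The main obstacle, and the step deserving the most care, is the bookkeeping that shows fresh non-receiving vertices can be discarded in a minimal predecessor and, relatedly, that it suffices to look for predecessors supported on (a subset of) $V(\theta)$ with edge set inherited from $E$; once that reduction is in place the vertex set and edge set of every candidate are fixed and only the labels vary, so the problem genuinely reduces to finitely many independent process-level pre-basis computations. A secondary point to get right is that the induced subgraph ordering is what makes the finite union of the per-basis-element outputs still a \emph{finite} basis — this is immediate from Lemma~\ref{lem:wqo} but should be stated explicitly. No heavy machinery beyond ``$TS$ has effective pre-basis'' and ``$k$-path bounded graphs under $\sqsubseteq$ form a decidable wqo'' is needed.
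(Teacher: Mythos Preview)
There is a genuine gap: your claim that ``fresh vertices are never needed in a minimal predecessor'' is false, and with it the reduction to predecessors supported on the same vertex set $V$ with the same edge set $E$ collapses. The vertex that \emph{broadcasts} may fail to lie in the image of the injection $h\colon \theta \hookrightarrow \theta''$, and in that case you cannot delete it without destroying the move. Concretely, take $TS$ to be the finite automaton with $q_0 \xrightarrow{!!a} q_0$ and $q_0 \xrightarrow{??a} q_1$, and let $\theta$ be a single vertex labelled $q_1$. Then $pre_{!!a}(\uparrow q_1) = \emptyset$, so your procedure outputs nothing; but the two-vertex path with both endpoints labelled $q_0$ is a genuine element of $pre(\uparrow\theta)$ (one endpoint broadcasts, the other receives and becomes $q_1$), and no single-vertex configuration below it lies in $pre(\uparrow\theta)$. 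Your completeness argument hides this: when you ``restrict attention to the image $g(V(\theta))$'' and build the candidate on that vertex set, the candidate simply cannot perform the broadcast if the broadcaster was outside the image.

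The paper's proof handles exactly this case by a second construction: for each basis element $\theta$ with underlying graph $G$, it additionally enumerates all $k$-path bounded graphs $H$ on $|V(G)|+1$ vertices that contain $G$ as an induced subgraph, treats the extra vertex as the (fresh) broadcaster, and labels it not via a process-level pre-basis but via the set $c_t$ of minimal configurations enabling some transition $t \in B_a$ (this is the separate effectiveness assumption stated at the start of Section~3). The neighbours of the extra vertex inside $G$ get labels from $B_{??a}^{u_j}$ as in your first case. Your argument is otherwise on the right track, and the ``broadcaster in the image'' case is essentially what you wrote; what is missing is this second family of candidates, together with the observation that one extra vertex (for the broadcaster) suffices and that its adjacencies to $G$ must be enumerated. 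A minor slip: in your completeness paragraph you wrote $\theta'' \sqsubseteq \theta$ where you mean $\theta \sqsubseteq \theta''$.
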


\begin{proof}
Let $\mathbb{B} = \{\theta_1,\cdots,\theta_n\}$ be a finite basis of the upward-closed set $S$. For each $\theta_i$, we will construct a finite number of graphs,
whose overall union will be a basis for $pre(S)$.

Let $\theta = \theta_i = (V,E,L)$ and let $G = (V,E)$. For every node $v \in V$ and for every letter $a \in \Sigma_b \cup \Sigma_r$, we can compute a basis
for the upward closure of $pre(\uparrow L(v))$ in the transition system $TS$ restricted to transitions labelled only by $a$. Let this basis be denoted by $B_a^v$. Recall that for a transition $t \in \Delta$, $c_t$ denotes the set of minimal configurations in which $t$ is enabled.

Consider all $k$-path bounded graphs $H_1,H_2,\cdots,H_l$ s.t. $G$ is an induced subgraph of each $H_i$ and $|V(H_i)| = |V(G)| + 1$ 
, i.e., each $H_i$ has one more vertex than $G$ and contains $G$ as an induced subgraph. Using these $k$-path bounded graphs, we will compute new $k$-path bounded graphs which will form a basis for $pre(S)$. 

Consider the following process of creating new \textit{labelled graphs} from the graph $G = (V,E)$.

\begin{enumerate}
\item Choose a vertex $v \in V$ and a letter $a \in \Sigma$. Let $u_1,u_2,\cdots,u_p$ be the set of all neighbors of $v$ in $G$.
\item Choose a configuration $c_v$ from $B_{!!a}^v$ and configurations $c_{u_i}$ from $B_{??a}^{u_i}$ (if they are non-empty) respectively.
\item Construct the labelled graph $G_{before} = (V,E,L')$ as follows:
	\begin{enumerate}
	\item Label the vertex $v$ with $c_v$ and label each $u_i$ with $c_{u_i}$ respectively.
	\item Label the remaining vertices with the same labels that they had in $\theta$.
	\end{enumerate}
\item Choose a broadcast transition $t_v$ enabled at $c_v$ labelled by $!!a$. Let $c_v \xrightarrow{t_v} c_v'$
\item Choose a receive transition $t_{u_i}$ for each $u_i$ s.t. $t_{u_i}$ is enabled at $c_{u_i}$ and is labelled by $??a$. Let $c_{u_i} \xrightarrow{t_{u_i}} 
c_{u_i}'$
\item Construct the labelled graph $G_{after} = (V,E,L'')$ as follows:
	\begin{enumerate}
	\item Label the vertex $v$ with $c_v'$ and label each $u_i$ with $c_{u_i}'$ respectively.
	\item Label the remaining vertices with the same labels that they had in $\theta$.
	\end{enumerate}
\item If $G_{after}$ lies in the set $S$, add $G_{before}$ as a basis element for the set $pre(S)$.
\end{enumerate}

Now consider a similar process of creating new \textit{labelled graphs} from the graph $H_i$ for each $i$.

\begin{enumerate}
\item Fix an injection $h: G \to H_i$ and fix a letter $a \in \Sigma$.
\item Let $v$ be the vertex in $H_i$ which is not in the image of $G$, i.e., $v \notin h(G)$ and let $u_1,\cdots,u_p$ be the set of all neighbors of $v$ in $H_i$.
\item Choose a broadcast transition $t$ labelled by $!!a$ and choose a configuration $c_v$ from the set $c_t$. Let $c_v
\xrightarrow{t} c_v'$.
\item Choose a configuration $c_{u_j}$ from $B_{??a}^{u_j}$ for each $u_j$.
\item Construct the labelled graph $H_{before}$ from the unlabelled graph $H_i$ as follows:
	\begin{enumerate}
	\item Label the vertex $v$ with $c_v$ and label each $u_j$ with $c_{u_j}$.
	\item Label the remaining vertices with the same labels that their pre-images had in $\theta$, i.e., $L'(w) = L(h^{-1}(w))$.
	\end{enumerate}
\item Choose a receive transition $t_{u_i}$ for each $u_i$ s.t. $t_{u_i}$ is enabled at $c_{u_i}$ and is labelled by $??a$. Let $c_{u_i} \xrightarrow{t_{u_i}}
c_{u_i}'$.
\item Construct the labelled graph $H_{after}$ from the unlabelled graph $H_i$ as follows:
	\begin{enumerate}
	\item Label the vertex $v$ with $c_v'$ and label each $u_j$ with $c_{u_j}'$.
	\item Label the remaining vertices with the same labels that their pre-images had in $\theta$, i.e., $L'(w) = L(h^{-1}(w))$.
	\end{enumerate}
\item If $H_{after}$ lies in the set $S$, add $H_{before}$ as a basis element for the set $pre(S)$.
\end{enumerate}

It is clear that all the graphs that we are adding to our collection should be in $pre(S)$. We will now show that if $G$ is a graph in $pre(S)$, then there exists a graph $G' \le G$ which we would have added as a basis element to the set $pre(S)$ by the above procedure.

Let $G \in pre(S)$. Therefore, there should exist a transition from $G$ to some graph $F \in S$. Let this transition be obtained by broadcasting $!!a$ from the vertex $v \in G$ and which in turn is received by all its neighbors $u_1,u_2,\cdots,u_p \in G$. Let
the labels of $v$ in $G$ and $F$ be denoted by $c_v,$ $c'_v$ respectively and let the labels of each $u_i$ in $G$ and $F$ be denoted by $c_{u_i}$ and $c'_{u_i}$ respectively.
Since $F \in S$, there exists a basis element $F' \in \mathbb{B}$ 
s.t. $F' \le F$. Let $h$ be the required injection from $F'$ to $F$. 
We now have two cases:

\begin{itemize}
\item The node $v$ is in the image of $h$: Wlog let $u_1,u_2,\cdots,u_w$ be the neighbors of $v$ which are in the image of $h$. In this case, consider the first part of the above procedure in which we constructed labelled graphs out of $F'$. Since $v$ is in the image of $h$, consider the vertex $h^{-1}(v)$ and let its configuration in $F'$ be denoted by $c'_{h^{-1}(v)}$. Also, let the configurations of $h^{-1}(u_i)$ in $F'$ be denoted by $c'_{h^{-1}(u_i)}$. Since, there exists a broadcast transition from $c_v$ to $c'_v \ge c'_{h^{-1}(v)}$, it follows that the set $B_{!!a}^{h^{-1}(v)}$ is non-empty. Similar reasoning enables us to conclude that each of the sets $B_{??a}^{h^{-1}(u_i)}$ are also non-empty for each $i \le w$. Therefore, there exists configurations $c_{h^{-1}(v)} \le c_v$ from $B_{!!a}^{h^{-1}(v)}$ and
$c_{h^{-1}(u_i)} \le c_{u_i}$ from $B_{??a}^{h^{-1}(u_i)}$ which we would have picked during our procedure. Hence, the graph obtained out of $F'$ by replacing $c'_{h^{-1}(v)}$ with $c_{h^{-1}(v)}$ and $c'_{h^{-1}(u_i)}$
with $c_{h^{-1}(u_i)}$ gives us a graph $G' \le G$ which would have been constructed by our procedure. 

\item The node $v$ is not in the image of $h$: Wlog let 
$u_1,\cdots,u_w$ be the neighbors of $v$ which are in the image of $h$.
In this case, consider the graph $H_k$ which is the same as $F'$, except it contains one more vertex called $sp$ which is connected to exactly $u_1,\cdots,u_w$. Since there exists a broadcast transition $t$ labelled by $!!a$ from $c_v$, it follows that $c_t$ is non-empty and so
we can pick a configuration $c_{sp} \le c_v$ from $c_t$ which we use as a label for the vertex $sp$. 
Similar to the previous case, for the vertices $u_1,u_2,\cdots,u_w$, we can obtain configurations $c_{h^{-1}(u_i)} \le c_{u_i}$ from $B^{h^{-1}(u_i)}_{??a}$. Hence the graph obtained out of $H_k$ by replacing $c_v$ with $c_{sp}$ and $c_{u_i}$ with $c_{h^{-1}(u_i)}$ gives us a graph $G' \le G$.
\end{itemize}
\end{proof}

\begin{theorem}
Coverability in $k$-path bounded configurations is decidable.
\end{theorem}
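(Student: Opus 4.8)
The plan is to show that the transition system $BN^k(TS)$ is itself a finitely branching labelled WSTS with a decidable wqo and effective pre-basis, so that Theorem~3 applies to it, and then to reduce the coverability of a process configuration $s$ in $BN^k$ to the (WSTS-)coverability of one fixed single-vertex $S$-graph in $BN^k(TS)$.

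First I would assemble the three preceding lemmas. Lemma~15 says that $(\Theta^k,\sqsubseteq)$ is a wqo, and Lemma~16 says that $\sqsubseteq$ is compatible with the broadcast transitions; since the transition alphabet $\Sigma$ is finite, this already shows that $BN^k(TS)=(\Theta^k,\Theta_0^k,\rightarrow)$ is a labelled WSTS. It is finitely branching: from a configuration $\theta$ there are finitely many choices of the broadcasting vertex $v$, and because the underlying process $TS$ is finitely branching there are finitely many choices for the new label of $v$ and of each of its (finitely many) neighbours. The wqo $\sqsubseteq$ is decidable: between two finite $S$-graphs there are only finitely many injections $h$ to try, and for each of them the edge conditions are checked directly, while the label conditions $L_1(u)\le L_2(h(u))$ are checked using decidability of the wqo of $TS$. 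Finally, Lemma~17 is exactly the statement that $BN^k(TS)$ has effective pre-basis. Hence all hypotheses of Theorem~3 are met and WSTS-coverability is decidable for $BN^k(TS)$.

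It remains to connect this with the coverability problem for $BN^k$ as phrased in the paper, which asks about covering a process configuration $s\in TS$ rather than an $S$-graph. Fix the one-vertex $S$-graph $\theta_s=(\{x\},\emptyset,\{x\mapsto s\})$, which is trivially $k$-path bounded. For any $\theta\in\Theta^k$ we have $\theta_s\sqsubseteq\theta$ if and only if there is a vertex $w\in V(\theta)$ with $s\le L_\theta(w)$, since a single source vertex imposes no edge constraint and the embedding only has to send $x$ to such a $w$. Therefore $s$ is coverable in $BN^k$ precisely when $\theta_s$ is coverable (in the WSTS sense) in $BN^k(TS)$, and the previous paragraph finishes the argument.

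The genuine work has already been carried out in Lemma~17 (effective pre-basis), so this theorem is essentially a corollary of Theorem~3; the only point that needs a little care is the initial-configuration test hidden inside the saturation procedure of Theorem~3. Having computed a finite basis $\{\theta_1,\dots,\theta_r\}$ of $pre^*(\uparrow\theta_s)$, one has to decide whether some $\theta_j$ lies $\sqsubseteq$ an $S_0$-graph; taking the embedding to be the identity and adding no extra vertices, this reduces to deciding, for each vertex $w$ of each $\theta_j$, whether $\uparrow L_{\theta_j}(w)\cap S_0\neq\emptyset$ in the underlying process $TS$. This is routine for the concrete processes considered here (finite automata, VASS, Petri nets with reset or transfer arcs), and I would simply remark that it is subsumed by the effectiveness assumptions already in force on the finite specification $P$ and by what Theorem~3 presupposes.
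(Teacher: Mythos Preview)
Your proposal is correct and follows essentially the same approach as the paper: both argue that $BN^k(TS)$ is a WSTS with effective pre-basis (via the three preceding lemmas), apply Theorem~3, and reduce process-coverability of $s$ to graph-coverability of the single-vertex $S$-graph labelled $s$. Your version is simply more explicit about checking the side conditions (finite branching, decidability of $\sqsubseteq$, the initial-configuration test) that the paper leaves implicit.
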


\begin{proof}
Let $s$ be the given configuration. Consider the graph $G$ with only one vertex $v$ whose label is $s$. It is clear that the configuration $s$ can be covered iff the graph $G$ can be covered in the transition system $BP^k$ under the induced subgraph ordering. But by the previous lemmas, we have shown that $BP^k$ is a \textit{well-structured transition system} under the induced subgraph ordering with an effective pre-basis.
Therefore, coverability in $BP^k$ is decidable and this concludes the proof.
\end{proof}

Hence coverability in the broadcast semantics of $k$-path bounded topologies reduces to checking coverability in another WSTS!

\subsection{Clique topologies}

We prove a similar result for the set of all clique topologies.

\begin{lemma}
The set of all clique configurations forms a well- quasi ordering under the induced subgraph order.
\end{lemma}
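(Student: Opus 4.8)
The plan is to reduce the statement to Higman's lemma. The first thing I would note is that, when the underlying graphs are both complete, the induced subgraph ordering $\sqsubseteq$ of the earlier definition simplifies drastically: in a clique every pair of distinct vertices forms an edge, so for any injection $h : V_1 \to V_2$ between clique configurations the biconditional $(u,v) \in E_1 \iff (h(u),h(v)) \in E_2$ holds automatically (both sides are true when $u \neq v$, and both false when $u = v$). Hence $\theta_1 \sqsubseteq \theta_2$ holds precisely when there is an injection $h : V(\theta_1) \to V(\theta_2)$ with $L(\theta_1)(u) \le L(\theta_2)(h(u))$ for every $u$; equivalently, the multiset of vertex labels of $\theta_1$ embeds, with respect to $\le$, into that of $\theta_2$.

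Next I would encode clique configurations as words. For each clique configuration $\theta$ fix an arbitrary enumeration $v_1,\dots,v_n$ of its vertices and set $w_\theta = L(\theta)(v_1)\cdots L(\theta)(v_n) \in S^*$. Since $(S,\le)$ is a wqo (it is the well-quasi order of the process $TS$), Higman's lemma tells us that $S^*$ equipped with the subword embedding order $\le^*$ --- where $w \le^* w'$ iff there is a strictly increasing injection $\phi$ from the positions of $w$ to those of $w'$ with $w(i) \le w'(\phi(i))$ for all $i$ --- is itself a wqo.

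Now take any infinite sequence $\theta_0, \theta_1, \dots$ of clique configurations and consider the words $w_{\theta_0}, w_{\theta_1}, \dots \in S^*$. By Higman's lemma there are indices $i < j$ with $w_{\theta_i} \le^* w_{\theta_j}$. The strictly increasing injection $\phi$ witnessing this, read through the chosen vertex enumerations, yields an injection $h : V(\theta_i) \to V(\theta_j)$ with $L(\theta_i)(u) \le L(\theta_j)(h(u))$ for all $u$; and, as observed above, no further edge condition needs to be checked since both graphs are cliques. Hence $\theta_i \sqsubseteq \theta_j$, which is exactly what it means for $\sqsubseteq$ to be a well-quasi order on the set of clique configurations.

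As for difficulties, there is really only one conceptual step: recognizing that the clique case is precisely the multiset-embedding order over $(S,\le)$ and that Higman's lemma applies to it (with the minor remark that Higman even hands us an order-respecting injection on positions, which is strictly stronger than the bare injection we require, so forgetting the ordering on positions is harmless). I would also flag --- to forestall the obvious temptation --- that, unlike the bounded-path case, one cannot invoke Ding's theorem here, since a clique $K_n$ contains a simple path of length $n-1$ and so clique topologies are not path-bounded; Higman's lemma is what makes the clique case go through instead.
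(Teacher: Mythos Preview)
Your proof is correct and follows essentially the same approach as the paper: both observe that for cliques the induced subgraph order collapses to multiset embedding of the label multisets over $(S,\le)$, and then invoke that finite multisets over a wqo form a wqo. The only cosmetic difference is that the paper cites the multiset-wqo fact as well known, whereas you derive it explicitly by encoding multisets as words and applying Higman's lemma (then forgetting the position order); your closing remark that Ding's theorem is inapplicable here is a nice addition.
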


\begin{proof}
Consider the poset $(P_f(S),\subseteq_{\le})$ where $P_f(S)$ is the set of all finite sub-multisets of $S$ and $S_1 \le S_2$ iff 
there exists an injection $h: S_1 \to S_2$ s.t. $s \le h(s), \forall \; s \in S_1$. It is well known that if $(S,\le)$ is a wqo,
then $(P_f(S),\subseteq_{\le})$ is also a wqo.

Let $G,G'$ be labelled clique configurations. It is clear that $L(G) \subseteq_{\le} L(G')$ iff $G \le G'$ under the induced subgraph ordering.
But $\subseteq_{\le}$ is a wqo. Hence, it follows that the set of all clique configurations forms a wqo under the induced subgraph ordering.
\end{proof}

The \textit{compatibility} property can be easily proved in an argument similar to the one given for $k$-path bounded graphs. The computation of pre-basis can be realized as follows: The algorithm given in the previous subsection, first selects a graph $G$ from the given basis $\mathbb{B}$ and then considers all $k$-path bounded graphs of size atmost $|G|+1$ which induce $G$ as a subgraph, after which it proceeds to construct a pre-basis from these $k$-path bounded graphs. We employ the same algorithm to construct a pre-basis for the clique topology as well, except in the first step, we replace the construction of the set of all $k$-path bounded graphs of size atmost $|G|+1$ which induce $G$, with the set of all clique graphs of size atmost $|G|+1$ which induce $G$. The proof of this algorithm follows from a similar proof given for the previous case. Hence we have,

\begin{theorem}
Coverability in clique configurations is decidable.
\end{theorem}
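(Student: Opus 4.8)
The plan is to mirror the proof of Theorem~18: we show that $BN(TS)$ with its configuration set restricted to clique $S$-graphs --- call this transition system $BN^{cl}$ --- is a finitely branching labelled WSTS under the induced subgraph ordering $\sqsubseteq$, equipped with a decidable wqo and an effective pre-basis, and then invoke Theorem~3. As in Theorem~18, the statement of the theorem then reduces to coverability in $BN^{cl}$: given the target configuration $s$, let $G_s$ be the one-vertex $S$-graph whose single vertex carries the label $s$; a clique $S$-graph $\theta$ has a vertex labelled by some $s' \ge s$ precisely when $G_s \sqsubseteq \theta$, so $s$ is coverable in the clique semantics iff $G_s$ is coverable in $BN^{cl}$.

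Three facts are needed. (i) $\sqsubseteq$ is a wqo on clique configurations --- this is Lemma~19, obtained from the fact that $(P_f(S),\subseteq_\le)$ is a wqo whenever $(S,\le)$ is --- and it is decidable, since on cliques $G \sqsubseteq G'$ reduces to the finite sub-multiset matching $L(G) \subseteq_\le L(G')$, which is solvable whenever $\le$ on $S$ is decidable. (ii) Compatibility of $\sqsubseteq$ with $BN^{cl}$ is proved by the same argument as Lemma~16: given $\theta_1 \xrightarrow{a} \theta_2$ with broadcaster $v$ and $\theta_1 \sqsubseteq \theta_1'$ via $h$, well-structuredness of $TS$ lets the image $h(v)$ replay the $!!a$-broadcast and the images of the affected vertices replay their $??a$-transitions, yielding $\theta_2' \sqsupseteq \theta_2$ witnessed by the same $h$; the argument is if anything simpler here, since in a clique the adjacency between the broadcaster and the receiving vertices holds automatically. (iii) Effective pre-basis: we run the construction from the proof of Lemma~17 with the single modification announced in the discussion above --- wherever the algorithm enumerates all $k$-path bounded graphs of size at most $|G|+1$ that contain $G$ as an induced subgraph, we instead enumerate all clique graphs of size at most $|G|+1$ that contain $G$ as an induced subgraph. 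The crucial point is that this keeps every object produced inside the clique class: a clique with one extra vertex made adjacent to all of its vertices is again a clique, so both the ``$G_{before}$/$G_{after}$'' family (no new vertex) and the ``$H_{before}$/$H_{after}$'' family (new vertex $sp$ adjacent to all of $F'$) consist of clique $S$-graphs.

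Soundness of the construction --- every graph it outputs lies in $pre(S)$ --- is immediate. The step I expect to require the actual work, and where the only genuinely new observation lies, is checking that the construction yields a complete basis for $pre(S)$: that for every $G \in pre(S)$ it outputs some $G' \sqsubseteq G$. Following Lemma~17, one fixes a transition $G \xrightarrow{a} F$ with $F \in S$, broadcaster $v$ and receivers $u_1,\dots,u_p$, a basis element $F' \sqsubseteq F$, and a witnessing injection $h$, and splits on whether $v$ lies in the image of $h$. If it does, pull back along $h$ and replace the labels of $h^{-1}(v)$ and of the $h^{-1}(u_i)$ by minimal enabling configurations drawn from the precomputed bases $B_{!!a}^{h^{-1}(v)}$ and $B_{??a}^{h^{-1}(u_i)}$, obtaining $G' \sqsubseteq G$ exactly as in the path-bounded case. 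If $v$ is not in the image of $h$, take $F'$ together with the extra vertex $sp$; the only difference from the path-bounded case is that in a clique the neighbours of $v$ lying in the image of $h$ are all of $h(V(F'))$, so $sp$ is forced to be adjacent to the whole of $F'$ --- which is exactly what the clique enumeration produces --- and one then labels $sp$ by a minimal configuration from $c_t$ (nonempty, since a $!!a$-transition $t$ is enabled at the label of $v$ in $G$) and lowers the receiver labels, again getting $G' \sqsubseteq G$. Once these three facts are in place, Theorem~3 applied to $BN^{cl}$ yields decidability of coverability, completing the proof.
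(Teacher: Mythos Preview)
Your proposal is correct and follows essentially the same route as the paper: it establishes that the clique-restricted network $BN^{cl}$ is a WSTS under $\sqsubseteq$ by invoking Lemma~19 for the wqo, the argument of Lemma~16 for compatibility, and the algorithm of Lemma~17 with the clique-for-$k$-path-bounded substitution for effective pre-basis, then reduces coverability of $s$ to coverability of the one-vertex graph $G_s$. The paper's own proof is terser --- it merely points to these earlier lemmas and the single modification in the pre-basis construction --- whereas you have spelled out the two-case analysis for completeness of the pre-basis and the observation that the unique clique extension keeps everything in the class; but the strategy is identical.
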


\subsection{Graphs with bounded diameter and degree}

It is known that the coverability problem for well-structured broadcast networks restricted to graphs of bounded diameter is undecidable, even when the underlying transition system is 
of finite state space \cite{clique}.
However we can regain decidability if along with bounded diameter, we also consider graphs of bounded degree.
To prove this, we use a non-trivial result of Hoffman and Singleton \cite{Moore}.
The result states for a fixed diameter $k$ and a degree $d$, the size of the largest (unlabelled) graph with diameter atmost $k$ and 
degree atmost $d$ is $M(k,d) = (k(k-1)^d-2)/(k-2)$. Hence, for finite state processes this immediately proves that the coverability problem is decidable.
But we can extend it in a straightforward way to the well-structured case as well. For a graph $G = (V,E)$ of bounded diameter and degree, 
consider the set $Lab(G,S) = \{G' \; | \; G' = (V,E,L) ; L: V \to S\}$, i.e., $Lab(G,S)$ is the set of all labelled graphs that can be obtained by
labelling the vertices in $G$ using labels from $S$.

\begin{lemma}
For a fixed graph $G$ of diameter $k$ and degree $d$, the set $Lab(G)$ is a well quasi ordering.
\end{lemma}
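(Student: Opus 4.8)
The plan is to reduce the statement to the known fact that a finite product of well-quasi orderings is again a well-quasi ordering (equivalently, $\mathbb{N}^n$ or $S^n$ with the componentwise order is a wqo whenever $S$ is). First I would observe that since $G = (V,E)$ is a \emph{fixed} graph of diameter at most $k$ and degree at most $d$, the Hoffman--Singleton bound $M(k,d)$ gives $|V| \le M(k,d)$; in particular $V$ is a fixed finite set, say $V = \{v_1,\dots,v_n\}$ with $n \le M(k,d)$. An element of $Lab(G)$ is then completely determined by the tuple $(L(v_1),\dots,L(v_n)) \in S^n$, because the edge set $E$ is the same for every member of $Lab(G)$.

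Next I would unwind the induced subgraph ordering restricted to $Lab(G)$. For $G', G'' \in Lab(G)$, an injection $h\colon V \to V$ witnessing $G' \sqsubseteq G''$ must be a bijection (as $V$ is finite) and, by the first clause of Definition of $\sqsubseteq$, must be an automorphism of the underlying graph $(V,E)$. So $G' \sqsubseteq G''$ iff there is a graph automorphism $\sigma$ of $(V,E)$ with $L'(v) \le L''(\sigma(v))$ for all $v$. In particular, the componentwise order $(L'(v_1),\dots,L'(v_n)) \le (L''(v_1),\dots,L''(v_n))$ in $S^n$ \emph{implies} $G' \sqsubseteq G''$ (take $\sigma = \mathrm{id}$). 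That is the only direction I need.

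The key step is then a pigeonhole/Dickson-type argument. Given any infinite sequence $G'_0, G'_1, \dots$ in $Lab(G)$, map each $G'_m$ to its label tuple $\tau_m \in S^n$. Since $(S,\le)$ is a wqo and $n$ is finite, $(S^n, \le_{\text{componentwise}})$ is a wqo (finite products of wqos are wqos, a standard consequence of the definition via Ramsey/Dickson). Hence there exist $i < j$ with $\tau_i \le \tau_j$ componentwise, and by the observation above this gives $G'_i \sqsubseteq G'_j$. Therefore $\sqsubseteq$ restricted to $Lab(G)$ is a wqo.

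I do not expect a serious obstacle here; the main thing to get right is the bookkeeping that the witnessing injection is forced to be a graph automorphism and that one only needs the trivial automorphism for the wqo direction, so the result follows purely from $S^n$ being a wqo without having to reason about the automorphism group of $G$ at all. (The Hoffman--Singleton bound is used only to guarantee $|V|$ is finite and bounded, which is what makes $S^n$ a \emph{finite} product.)
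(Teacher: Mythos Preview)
Your proof is correct and follows essentially the same approach as the paper: fix an enumeration $v_1,\dots,v_n$ of the vertices, identify each element of $Lab(G)$ with its label tuple in $S^n$, and invoke the fact that a finite product of wqos is a wqo. If anything you are more careful than the paper, which does not spell out that componentwise $\le$ in $S^n$ implies $\sqsubseteq$ via the identity injection; note also that the Hoffman--Singleton bound is not actually needed for this lemma since $G$ is already a fixed finite graph.
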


\begin{proof}
Suppose $G$ has $n$ vertices. Arbitrarily arrange the vertices in some order $v_1,\cdots,v_n$.
Notice then that each labelled graph $G' \in Lab(G,S)$ can be thought of as an element in $X^n$ where $G'$ is mapped to the $n$-tuple
$(L(v_1),L(v_2),\cdots,L(v_n))$. It is well known that if $(X,\le)$ is a wqo then $(X^n,\le)$ is also a wqo under the pairwise ordering.
From this the lemma immediately follows.
\end{proof}

For a fixed graph $G$, we can prove \textit{compatibility} and \textit{effective pre-basis} in a manner similar to the other cases.
Hence, we have

\begin{theorem}
Coverability problem for $k$-bounded diameter and $d$-bounded degree graphs is decidable.
\end{theorem}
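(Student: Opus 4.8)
The plan is to mirror exactly the three-step WSTS recipe used for $k$-path bounded and clique topologies, now applied to the transition system $BN^{k,d}(TS)$ obtained by restricting $BN(TS)$ to the graphs of diameter at most $k$ and degree at most $d$. The first ingredient is the Hoffman--Singleton/Moore bound: there are only finitely many unlabelled graphs $G_1,\dots,G_N$ (up to isomorphism) of diameter $\le k$ and degree $\le d$, since any such graph has at most $M(k,d)$ vertices. Thus the set of configurations $\Theta^{k,d}$ decomposes as the finite union $\bigcup_{j} Lab(G_j,S)$, and the induced subgraph ordering restricted to $\Theta^{k,d}$ respects this decomposition in the sense that $\theta_1 \sqsubseteq \theta_2$ forces the underlying graph of $\theta_1$ to be an induced subgraph of that of $\theta_2$ (both drawn from the finite list). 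A finite union of wqo's is a wqo, so combined with Lemma (``$Lab(G)$ is a wqo''), the ordering $\sqsubseteq$ on $\Theta^{k,d}$ is a well-quasi ordering.

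Next I would verify \emph{compatibility}: given $\theta_1 \xrightarrow{a} \theta_2$ and $\theta_1 \sqsubseteq \theta_1'$ via an injection $h$, I pick the broadcasting vertex $v$ of $\theta_1$ and its receiving neighbours $u_1,\dots,u_l$; since $TS$ is well-structured and $L(\theta_1)(v) \le L(\theta_1')(h(v))$, a matching $!!a$-transition is enabled at $h(v)$, and similarly matching $??a$-transitions are enabled at the $h(u_i)$, which are neighbours of $h(v)$ because $h$ preserves edges. Firing these in $\theta_1'$ produces $\theta_2'$ with $\theta_2 \sqsubseteq \theta_2'$ witnessed by the same $h$. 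This is the verbatim argument from the $k$-path bounded lemma and goes through unchanged, because reconfiguration is not in play and $h$ is an injection.

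The remaining ingredient is \emph{effective pre-basis}. Here I would reuse the algorithm from the bounded-path subsection essentially word for word: given a finite basis $\mathbb{B} = \{\theta_1,\dots,\theta_n\}$ of an upward-closed set $S$, for each $\theta_i$ with underlying graph $G$ I first enumerate all graphs $H$ of diameter $\le k$ and degree $\le d$ that contain $G$ as an induced subgraph with $|V(H)| = |V(G)|+1$ (finitely many, again by the Moore bound), then run the two ``$G_{before}/G_{after}$'' and ``$H_{before}/H_{after}$'' constructions, using the bases $B_{!!a}^v, B_{??a}^v$ for $pre$ in $TS$ restricted to a single label and the minimal-configuration sets $c_t$. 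The correctness argument splits on whether the broadcasting vertex $v$ of a given $G \in pre(S)$ lies in the image of the injection witnessing $F' \sqsubseteq F$ for some $F' \in \mathbb{B}$; if not, we add one fresh vertex $sp$ adjacent to exactly the $h$-images of $v$'s relevant neighbours, and the resulting augmented graph still has diameter $\le k$ and degree $\le d$ (this needs a one-line check, but since we only ever enumerate graphs within $\Theta^{k,d}$, the construction only emits legal configurations). The main obstacle --- and it is minor --- is precisely this bookkeeping about staying inside the diameter/degree bound when adding the extra vertex: unlike the path-bounded case where ``one more vertex on the end'' is obviously path-bounded, here one must observe that we only enumerate the one-vertex extensions $H$ that already satisfy the bounds and simply discard the rest, so no illegal graph is ever produced.

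Finally, with $\sqsubseteq$ a decidable wqo on $\Theta^{k,d}$ compatible with $\rightarrow$ and with effective pre-basis, Theorem (coverability for labelled WSTS) applies: to decide whether a configuration $s$ is coverable, take the single-vertex graph $G$ labelled $s$ and decide coverability of $G$ in $BN^{k,d}(TS)$. Hence coverability for $k$-bounded diameter and $d$-bounded degree graphs is decidable.
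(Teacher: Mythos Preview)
Your route is genuinely different from the paper's. The paper does not build a single WSTS on all of $\Theta^{k,d}$ under the induced-subgraph order; instead it uses the Moore bound directly to list the finitely many underlying unlabelled graphs $G_1,\dots,G_N$, treats each $Lab(G_j,S)\cong S^{|V(G_j)|}$ as its own WSTS under the product order, and decides coverability graph by graph (asking, for each fixed $G_j$, whether some vertex can reach a label $\ge s$). That per-graph reduction is strictly simpler: compatibility and effective pre-basis for $S^n$ follow immediately from those of $TS$, and no reasoning about embeddings between different underlying shapes is needed.

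Your unified approach is largely sound---the wqo step (finite union of the $Lab(G_j,S)$) and the compatibility step go through just as in Lemma~16---but the effective pre-basis step has a real gap, not the ``minor obstacle'' you describe. The correctness argument for the path-bounded pre-basis relies on the fact that the one-vertex extension $H_k$ produced from a basis element $F'$ (when the broadcaster lies outside the image of $h$) is again in the class: $H_k$ is an induced subgraph of the witness $G$, and $k$-path-boundedness is hereditary under induced subgraphs. Bounded diameter is \emph{not} hereditary: removing vertices can destroy shortcuts and increase diameter (e.g.\ $C_5$ has diameter $2$ but its induced $P_4$ has diameter $3$). Hence the $H_k$ the correctness argument needs may have diameter $>k$ and will simply not appear in your enumeration. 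Your observation that ``we only ever enumerate graphs within $\Theta^{k,d}$'' secures soundness of the emitted elements, not completeness of the basis. The natural repair is to enumerate, for each basis element, \emph{all} shapes in $\Theta^{k,d}$ containing it as an induced subgraph (finite by the Moore bound) rather than only one-vertex extensions---at which point you have essentially arrived at the paper's per-graph decomposition.
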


\begin{proof}
Let $B = \{b_1,\cdots,b_m\}$ be a finite basis for the transition system $TS$ and let $s$ be the given configuration for which coverability needs to be determined.
Let $G$ be a fixed graph on $n$ vertices and consider the set $B_i = \{(b_{j_1},b_{j_2},\cdots,b_{j_{i-1}},s,b_{j_i},b_{j_{i+1}},\cdots,b_{j_n}): 
\text{ each } b_{j_l} \in B\}$ and let $B' = \cup_{1 \le i \le n} B_i$. Clearly the set $B'$ is finite. 

The above properties imply that given a fixed graph $G$ of bounded diameter and degree, it can be decided if any configuration from $B'$ can be covered
from $G$. But we know that the number of graphs with diameter $k$ and degree $d$ is finite. Hence, we can check if at least one configuration
from $B'$ can be covered from any of these graphs and so the coverability problem is decidable for bounded diameter and degree graphs.
\end{proof}

\section{Conclusion}

In this paper, we have defined broadcast networks for well-structured processes and proved decidability of coverability for 
various types of semantics. In particular, we have given an algorithm to determine if a given configuration can be covered in 
any run under the reconfiguration semantics. We have also studied decision procedures for various classes of restricted topologies 
which include the set of all path bounded graphs, the set of all cliques, and the set of all graphs with bounded diameter and degree.
A notable ingredient in these decision procedures is the construction of another well-structured transition system to decide 
coverability of configurations.

\nocite{*}
\bibliographystyle{eptcs}
\bibliography{ref}

\end{document}